\newtheorem{prp}{Proposition}
\newtheorem{thm}{Theorem}
\newtheorem{rem}[prp]{Remark}
\newtheorem{cor}[prp]{Corollary}
\newtheorem{lem}[prp]{Lemma}
\def\mb{\mathbf}
\def\opn{\operatorname*}
\def\bs{\boldsymbol}
\def\bb{\mathbb}
\def\ds{\mathds}
\def\mc{\mathcal}
\def\ss#1{{\sf #1}}
\def\esc#1{\textrm{#1}}
\def\rveci#1#2{\uppercase{#1}_{#2}}
\def\rvecri#1#2{\lowercase{#1}_{#2}}
\def\vec#1{\mb{#1}}
\def\rvec#1{\bs{\uppercase{#1}}} 
\def\rvecr#1{\bs{\lowercase{#1}}} 
\def\vecs#1{\bs{#1}}
\def\mat#1{\mb{\uppercase{#1}}}
\def\mats#1{\bs{#1}}
\def\dim{n}
\def\dimp{m}
\def\R{\ds{R}}
\def\RS{\R} 
\def\Sym#1{\ds{S}^{#1}}
\def\PSD#1{\Sym{#1}_+}
\def\PD#1{\Sym{#1}_{++}}
\def\N#1#2{\mc{N}\left(#1,#2\right)}
\def\NZ#1{\N{\vecs{0}}{#1}}
\def\pdf#1#2{P_{\rvec{#1}}(#2)}
\def\EspOp{\ss{E}}
\def\Esp#1{\EspOp \left\{#1\right\}}  
\def\E_#1#2{\EspOp_{#2}\left\{#1\right\}}  
\def\d{\opn{d}\!}
\def\H{T}
\def\T{T}
\def\nume{\esc{e}}
\def\diag#1{\vec{d}_{#1}}
\def\Diag#1{\mat{D}_{#1}}
\def\I{I} 
\def\Ent{h} 
\def\EP{N} 
\def\preEig{\mats{\Lambda}}
\def\preeig#1{\lambda_{#1}}
\def\preEvec{\vecs{\lambda}}
\def\noiseR{\mats{\Gamma}}
\def\noiser#1{\gamma_{#1}}
\def\CostaT{t}
\def\CosMaT{T}
\def\req#1{(\ref{#1})}
\def\ie{i.e.}
\def\eg{e.g.}
\begin{document}

\title{A multivariate generalization of Costa's entropy power inequality}

\author{\authorblockN{Miquel Payar\'o and Daniel P.~Palomar}
\authorblockA{Department of Electronic
and Computer Engineering,\\ Hong Kong University of Science and
Technology, \\ Clear Water Bay, Kowloon, Hong Kong. \\ \{{\tt
miquel.payaro, palomar}\}{\tt @ust.hk}}}

\maketitle

\begin{abstract}
A simple multivariate version of Costa's entropy power inequality is
proved. In particular, it is shown that if independent white
Gaussian noise is added to an arbitrary multivariate signal, the
entropy power of the resulting random variable is a multidimensional
concave function of the individual variances of the components of
the signal. As a side result, we also give an expression for the
Hessian matrix of the entropy and entropy power functions with
respect to the variances of the signal components, which is an
interesting result in its own right.
\end{abstract}

\section{Introduction}

The entropy power of the random vector $\rvec{Y} \in \R^{\dim}$ was
first introduced by Shannon in his seminal work \cite{shannon:48}
and is, since then, defined as
\begin{gather} \label{eq:EP}
\EP(\rvec{Y}) = \frac{1}{2 \pi \nume} \exp \left(
\frac{2}{\dim} \Ent(\rvec{Y}) \right),
\end{gather}
where $\Ent(\rvec{Y})$ represents the differential entropy, which,
for continuous random vectors reads as\footnote{Throughout this
paper we work with natural logarithms.}
\begin{gather} \nonumber
\Ent(\rvec{Y}) = \Esp{\log \frac{1}{\pdf{\rvec{Y}}{\rvec{Y}}}} 
.
\end{gather}
For the case where the distribution of $\rvec{y}$ assigns positive
mass to one or more singletons in $\R^{\dim}$, the above definition
is extended with $\Ent(\rvec{Y}) = -\infty$.

The entropy power of a random vector $\rvec{Y}$ represents the
variance (or power) of a standard Gaussian random vector $\rvec{Y}_G
\sim \NZ{\sigma^2\mat{I}_\dim}$ such that both $\rvec{Y}$ and
$\rvec{Y}_G$ have identical differential entropy, $\Ent(\rvec{Y}_G)
= \Ent(\rvec{Y})$.

\subsection{Shannon's entropy power inequality (EPI)}

For any two independent arbitrary random vectors $\rvec{x} \in
\R^\dim$ and $\rvec{w} \in \R^\dim$, Shannon gave in
\cite{shannon:48} the following inequality:
\begin{gather} \nonumber 
\EP(\rvec{x} + \rvec{w}) \geq \EP(\rvec{x}) + \EP(\rvec{w}).
\end{gather}
The first rigorous proof of Shannon's EPI was given in
\cite{stam:59} by Stam, and was simplified by Blachman in
\cite{blachman:65}. A simple and very elegant proof by Verdú and Guo
based on estimation theoretic considerations has recently appeared
in \cite{verdu:06}.

Among many other important results, Bergmans' proof of the converse
for the degraded Gaussian broadcast channel \cite{bergmans:74} and
Oohama's partial solution to the rate distortion region problem for
Gaussian multiterminal source coding systems \cite{oohama:05} follow
from Shannon's EPI.

\subsection{Costa's EPI}

Under the setting of Shannon's EPI, Costa proved in \cite{costa:85}
that, provided that the random vector $\rvec{w}$ is white Gaussian
distributed, then Shannon's EPI can be strengthened to
\begin{gather} \label{eq:CostaEPIO}
\EP(\rvec{x} + \sqrt{\CostaT}\rvec{w}) \geq (1-\CostaT)\EP(\rvec{x}) + \CostaT\EP(\rvec{x} + \rvec{w}),
\end{gather}
where $t\in[0,1]$. As Costa noted, the above EPI is equivalent to
the concavity of the entropy power function $\EP(\rvec{x} +
\sqrt{\CostaT} \rvec{w})$ with respect to the parameter $\CostaT$,
or, formally\footnote{The equivalence between equations
\req{eq:CostaEPIO} and \req{eq:CostaEPI} is due to the fact that the
function $\EP\big(\rvec{x} + \sqrt{\CostaT} \rvec{w}\big)$ is twice
differentiable almost everywhere thanks to the smoothing properties
of the added Gaussian noise.}
\begin{gather} \label{eq:CostaEPI}
\frac{\d^2}{\d \CostaT^2} \EP\big(\rvec{x} + \sqrt{\CostaT}
\rvec{w}\big) \leq 0.
\end{gather}
Due to its inherent interest and to the fact that the proof by Costa
was rather involved, simplified proofs of his result have been
subsequently given in \cite{dembo:89, villani:00, guo:06, rioul:07}.

Additionally, in his paper Costa presented two extensions of his
main result in \req{eq:CostaEPI}. Precisely, he showed that the EPI
is also valid when the Gaussian vector $\rvec{w}$ is not white, and
also for the case where the $\CostaT$ parameter is multiplying the
arbitrarily distributed random vector $\rvec{X}$,
\begin{gather} \label{eq:tXpW}
\frac{\d^2}{\d \CostaT^2} \EP\big(\sqrt{\CostaT} \rvec{x} +
\rvec{w}\big) \leq 0.
\end{gather}

Similarly to Shannon's EPI, Costa's EPI has been used successfully
to derive important information-theoretic results concerning, \eg,
Gaussian interference channels in \cite{costa:85b} or multi-antenna
flat fading channels with memory in \cite{lapidoth:03}.

\subsection{Aim of the paper}

Our objective is to extend the particular case in \req{eq:tXpW} of
Costa's EPI to the multivariate case, allowing the real parameter
$\CostaT\in\R$ to become a matrix
$\mat{\CosMaT}\in\R^{\dim\times\dim}$, which, to the best of the
authors' knowledge, has not been considered before.

Beyond its theoretical interest, the motivation behind this study is
due to the fact that the concavity of the entropy power with respect
to $\mat{\CosMaT}$ implies the concavity of the entropy and mutual
information quantities, which would be a very desirable property in
optimization procedures in order to be able to, \eg, design the
linear precoder that maximizes the mutual information in the linear
vector Gaussian channel with arbitrary input distributions.

Consequently, we investigate the concavity of the function
\begin{gather} \label{eq:general_case_X}
\EP \big(\mat{\CosMaT}^{1/2}\rvec{x} + \rvec{w}\big),
\end{gather}
with respect to the symmetric matrix $\mat{\CosMaT} =
\mat{\CosMaT}^{1/2}\mat{\CosMaT}^{\T/2}$. Unfortunately, the concavity in $\mat{\CosMaT}$ of the entropy power can be easily disproved by finding simple counterexamples as in \cite{payaro:08} or even through numerical computations of the entropy power. 

Knowing this negative result, we thus focus our study on the next
possible multivariate candidate: a diagonal matrix. Our objective
now is to study the concavity of
\begin{gather} \label{eq:EPX}
\EP \big( \preEig^{1/2}\rvec{X} + \rvec{W} \big),
\end{gather}
w.r.t.~the diagonal matrix $\preEig = \opn{diag}(\preEvec)$, with
$[\preEvec]_i = \preeig{i}$.

For the sake of notation, throughout this work we
define
\begin{gather} \nonumber 
\rvec{Y} = \preEig^{1/2}\rvec{X} + \rvec{W},
\end{gather}
where we recall that the random vector $\rvec{W}$ is assumed to
follow a white zero-mean Gaussian distribution and the distribution
of the random vector $\rvec{X}$ is arbitrary. In particular, the
distribution of $\rvec{X}$ is allowed to assign positive mass to one
or more singletons in $\R^{\dim}$. Consequently, the results
presented in Theorems \ref{thm:MV_EI} and \ref{thm:MV_EPI} in
Section \ref{sec:results} also hold for the case where the random
vector $\rvec{X}$ is discrete.

\section{Mathematical preliminaries}

In this section we present a number of lemmas followed by a
proposition that will prove useful in the proof of our
multidimensional EPI. In our derivations, the identity matrix is
denoted by $\mat{I}$, the vector with all its entries equal to $1$
is represented by $\vecs{1}$, and $\mat{A}\circ\mat{B}$ represents
the Hadamard (or Schur) element-wise matrix product.

\begin{lem}[Bhatia {\cite[p.~15]{bhatia:07}}] \label{lem:XXXX}
Let $\mat{A} \in \PSD{\dim}$ be a positive
semidefinite matrix, $\mat{A} \geq \mats{0}$. Then it follows that
\begin{gather} \nonumber
\left[
\begin{array}{cc}
\mat{A} & \mat{A} \\
\mat{A} & \mat{A}
\end{array}
\right] \geq \mats{0}.
\end{gather}
\end{lem}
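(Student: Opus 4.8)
The plan is to establish positive semidefiniteness directly from the definition, by examining the associated quadratic form. Denote by $\mat{M}$ the $2\dim\times 2\dim$ block matrix appearing in the statement; it suffices to show that $\vec{z}^\T\mat{M}\vec{z}\geq 0$ for every $\vec{z}\in\R^{2\dim}$. First I would partition the test vector conformally with the block structure of $\mat{M}$, writing $\vec{z}=[\vec{u}^\T,\vec{v}^\T]^\T$ with $\vec{u},\vec{v}\in\R^{\dim}$.

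Carrying out the block multiplication, the quadratic form expands as
\begin{gather} \nonumber
\vec{z}^\T\mat{M}\vec{z} = \vec{u}^\T\mat{A}\vec{u} + \vec{u}^\T\mat{A}\vec{v} + \vec{v}^\T\mat{A}\vec{u} + \vec{v}^\T\mat{A}\vec{v}.
\end{gather}
Since positive semidefiniteness of $\mat{A}$ entails its symmetry, the two cross terms coincide and the entire right-hand side collapses to the single quadratic form $(\vec{u}+\vec{v})^\T\mat{A}(\vec{u}+\vec{v})$. Applying the hypothesis $\mat{A}\geq\mats{0}$ to the vector $\vec{u}+\vec{v}\in\R^{\dim}$ shows this quantity is nonnegative, and since $\vec{z}$ was arbitrary we conclude $\mat{M}\geq\mats{0}$.

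I expect no genuine obstacle in this argument; the only care needed is in the conformal block expansion and in noting that the off-diagonal contributions merge by the symmetry of $\mat{A}$. For completeness I would remark on two equivalent routes that avoid the explicit expansion: the congruence factorization $\mat{M}=\mat{P}\mat{A}\mat{P}^\T$ with $\mat{P}=[\mat{I},\mat{I}]^\T\in\R^{2\dim\times\dim}$, from which positive semidefiniteness is inherited because congruence preserves the sign of the spectrum; or the Kronecker representation $\mat{M}=(\vecs{1}\vecs{1}^\T)\otimes\mat{A}$, which is positive semidefinite as the Kronecker product of the two positive semidefinite factors $\vecs{1}\vecs{1}^\T$ and $\mat{A}$. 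Any one of these observations closes the proof.
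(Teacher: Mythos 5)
Your argument is correct, and it takes a slightly different route from the paper. The paper first factors $\mat{A}=\mat{C}\mat{C}^\T$ and then exhibits the block matrix as the Gram matrix $\bigl[\begin{smallmatrix}\mat{C}\\ \mat{C}\end{smallmatrix}\bigr]\bigl[\begin{smallmatrix}\mat{C}^\T & \mat{C}^\T\end{smallmatrix}\bigr]$, so positive semidefiniteness is read off from the form $\mat{P}\mat{P}^\T$. You instead evaluate the quadratic form directly and observe that it collapses to $(\vec{u}+\vec{v})^\T\mat{A}(\vec{u}+\vec{v})$, which is your congruence $\mat{M}=\mat{P}\mat{A}\mat{P}^\T$ with $\mat{P}=[\mat{I}\;\;\mat{I}]^\T$ in disguise. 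The two proofs are equivalent in content, but yours is marginally more economical: it never invokes the existence of a factorization $\mat{A}=\mat{C}\mat{C}^\T$ (a standard but nontrivial fact about positive semidefinite matrices), whereas the paper's proof depends on it. Your Kronecker-product remark $\mat{M}=(\vecs{1}\vecs{1}^\T)\otimes\mat{A}$ is also valid and is perhaps the cleanest one-line justification, though it imports the fact that the Kronecker product preserves positive semidefiniteness. Any of the three closes the lemma; no gaps.
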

\vspace{.1cm}
\begin{proof}
Since $\mat{A} \geq \mats{0}$, consider $\mat{A} =
\mat{C}\mat{C}^\T$ and write
\begin{gather} \nonumber
\left[
\begin{array}{cc}
\mat{A} & \mat{A} \\
\mat{A} & \mat{A}
\end{array}
\right] = \left[
\begin{array}{c}
\mat{C} \\
\mat{C}
\end{array}
\right]\left[
\begin{array}{cc}
\mat{C}^\T & \mat{C}^\T
\end{array}
\right].
\end{gather}
\end{proof}

\begin{lem}[Bhatia {\cite[Exercise 1.3.10]{bhatia:07}}] \label{lem:XIIinvX}
Let $\mat{A} \in \PD{\dim}$ be a positive definite matrix, $\mat{A}
> \mats{0}$. Then it follows that
\begin{gather} \label{eq:XIIinvX}
\left[
\begin{array}{cc}
\mat{A} & \mat{I} \\
\mat{I} & \mat{A}^{-1}
\end{array}
\right] \geq \mats{0}.
\end{gather}
\end{lem}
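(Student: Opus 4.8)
The plan is to exhibit an explicit factorization of the block matrix as $\mat{M}\mat{M}^\T$, mirroring the argument already used in Lemma~\ref{lem:XXXX}; positive semidefiniteness then follows for free. Since $\mat{A} > \mats{0}$ is symmetric positive definite, it admits a symmetric positive definite square root $\mat{A}^{1/2}$, which is invertible, and I write $\mat{A}^{-1/2}$ for its inverse. Both $\mat{A}^{1/2}$ and $\mat{A}^{-1/2}$ are symmetric, they commute, and they satisfy $\mat{A}^{1/2}\mat{A}^{-1/2} = \mat{A}^{-1/2}\mat{A}^{1/2} = \mat{I}$.

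Next I would set the $2\dim \times \dim$ matrix $\mat{M} = \left[ \begin{array}{c} \mat{A}^{1/2} \\ \mat{A}^{-1/2} \end{array} \right]$ and check by direct block multiplication that
\begin{gather} \nonumber
\mat{M}\mat{M}^\T = \left[ \begin{array}{cc} \mat{A}^{1/2}\mat{A}^{1/2} & \mat{A}^{1/2}\mat{A}^{-1/2} \\ \mat{A}^{-1/2}\mat{A}^{1/2} & \mat{A}^{-1/2}\mat{A}^{-1/2} \end{array} \right] = \left[ \begin{array}{cc} \mat{A} & \mat{I} \\ \mat{I} & \mat{A}^{-1} \end{array} \right],
\end{gather}
where the diagonal blocks reduce using $\mat{A}^{1/2}\mat{A}^{1/2} = \mat{A}$ and $\mat{A}^{-1/2}\mat{A}^{-1/2} = \mat{A}^{-1}$, and the off-diagonal blocks collapse to $\mat{I}$. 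Since every matrix of the form $\mat{M}\mat{M}^\T$ is positive semidefinite, \req{eq:XIIinvX} is established. The proof is essentially immediate, so there is no serious obstacle; the one point I would be careful about is the symmetry of the square-root factors, because identifying the block matrix with $\mat{M}\mat{M}^\T$ (and not merely with a product of unrelated factors) requires $(\mat{A}^{1/2})^\T = \mat{A}^{1/2}$ and $(\mat{A}^{-1/2})^\T = \mat{A}^{-1/2}$. These hold because $\mat{A}$ is symmetric, which is exactly where the positive-definiteness hypothesis does its work.

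As an alternative that I would mention but not adopt, the statement follows directly from the Schur-complement test for block positive semidefiniteness: the $(2,2)$ block $\mat{A}^{-1}$ is positive definite, and its Schur complement in the full matrix is $\mat{A} - \mat{I}\,(\mat{A}^{-1})^{-1}\,\mat{I} = \mat{A} - \mat{A} = \mats{0} \geq \mats{0}$, which certifies the matrix as positive semidefinite. The factorization above is just the constructive counterpart of this computation, and since it matches the style of the preceding lemma it is the version I would present.
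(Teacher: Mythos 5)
Your proof is correct and takes essentially the same approach as the paper: the paper factors $\mat{A}=\mat{C}\mat{C}^\T$ and exhibits the block matrix as a congruence of the all-ones block matrix (invoking Lemma~\ref{lem:XXXX} and Sylvester's law of inertia), while you simply take $\mat{C}=\mat{A}^{1/2}$ symmetric and collapse that into a single Gram factorization $\mat{M}\mat{M}^\T$. Your remark that symmetry of $\mat{A}^{1/2}$ is what makes the off-diagonal blocks equal to $\mat{I}$ is exactly the right point of care.
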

\vspace{.1cm}
\begin{proof}
Consider again $\mat{A} = \mat{C}\mat{C}^\T$, then we have
$\mat{A}^{-1} = \mat{C}^{-\T}\mat{C}^{-1}$. Now, simply write
\req{eq:XIIinvX} as
\begin{gather} \nonumber
\left[
\begin{array}{cc}
\mat{A} & \mat{I} \\
\mat{I} & \mat{A}^{-1}
\end{array}
\right] 
= \left[
\begin{array}{cc}
\mat{C} & \mats{0} \\
\mats{0} & \mat{C}^{-\T}
\end{array}
\right]\left[
\begin{array}{cc}
\mat{I} & \mat{I} \\
\mat{I} & \mat{I}
\end{array}
\right]\left[
\begin{array}{cc}
\mat{C}^{\T} & \mats{0} \\
\mats{0} & \mat{C}^{-1}
\end{array}
\right], \nonumber
\end{gather}
which, from Sylvester's law of inertia for congruent matrices
\cite[p.~5]{bhatia:07} and Lemma \ref{lem:XXXX}, is positive
semidefinite.
\end{proof}

\begin{lem}[Schur Theorem] \label{lem:schur_th}
If the matrices $\mat{A}$ and $\mat{B}$ are positive semidefinite,
then so is the product $\mat{A}\circ\mat{B}$. If, both $\mat{A}$ and
$\mat{B}$ are positive definite, then so is $\mat{A}\circ\mat{B}$.
In other words, the class of positive (semi)definite matrices is
closed under the Hadamard product.
\end{lem}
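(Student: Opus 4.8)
The plan is to reduce the statement to the elementary fact that a Gram matrix is positive semidefinite. Following the same factorization device used in Lemmas~\ref{lem:XXXX} and~\ref{lem:XIIinvX}, I would write $\mat{A} = \mat{C}\mat{C}^\T$ and $\mat{B} = \mat{D}\mat{D}^\T$, and denote by $\vec{c}_i^\T$ and $\vec{d}_i^\T$ the $i$th rows of $\mat{C}$ and $\mat{D}$, so that $[\mat{A}]_{ij} = \vec{c}_i^\T\vec{c}_j$ and $[\mat{B}]_{ij} = \vec{d}_i^\T\vec{d}_j$. The key algebraic identity is that the $(i,j)$ entry of the Hadamard product factorizes as a single inner product,
\begin{gather} \nonumber
[\mat{A}\circ\mat{B}]_{ij} = (\vec{c}_i^\T\vec{c}_j)(\vec{d}_i^\T\vec{d}_j) = (\vec{c}_i\otimes\vec{d}_i)^\T(\vec{c}_j\otimes\vec{d}_j),
\end{gather}
where $\otimes$ denotes the Kronecker product. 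Setting $\vec{e}_i = \vec{c}_i\otimes\vec{d}_i$, this exhibits $\mat{A}\circ\mat{B}$ as the Gram matrix of the vectors $\{\vec{e}_i\}$, which is manifestly positive semidefinite; this settles the first claim.

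For the positive definite case I would not attempt to establish strict positivity directly from the Gram representation, since arguing that the $\vec{e}_i$ are linearly independent is delicate; instead I would deduce it from the semidefinite case by a perturbation. Assuming $\mat{A} > \mats{0}$ and $\mat{B} > \mats{0}$, I would pick $\epsilon > 0$ smaller than the least eigenvalue of $\mat{A}$, so that $\mat{A} - \epsilon\mat{I} \geq \mats{0}$, and decompose
\begin{gather} \nonumber
\mat{A}\circ\mat{B} = (\mat{A}-\epsilon\mat{I})\circ\mat{B} + \epsilon\,(\mat{I}\circ\mat{B}).
\end{gather}
The first term is positive semidefinite by the part already proved, while $\mat{I}\circ\mat{B}$ is the diagonal matrix carrying the diagonal entries of $\mat{B}$, which are strictly positive because $\mat{B} > \mats{0}$; hence $\epsilon(\mat{I}\circ\mat{B}) > \mats{0}$. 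Since the sum of a positive semidefinite and a positive definite matrix is positive definite, this gives $\mat{A}\circ\mat{B} > \mats{0}$.

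The main obstacle is organizational rather than technical: the semidefinite statement is immediate once the factorization identity above is recognized, but one must be careful that the very same identity does \emph{not} automatically yield strict definiteness, since a Gram matrix of $\dim$ vectors living in a space of dimension up to $\dim^2$ need not have full rank. The perturbation step is precisely what circumvents having to track the independence of the tensored vectors, and it is the portion of the argument I would state most carefully.
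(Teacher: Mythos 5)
Your proof is correct. Note, however, that the paper does not actually prove this lemma at all --- it simply cites Horn and Johnson (Theorem 7.5.3 of \emph{Matrix Analysis} and Theorem 5.2.1 of \emph{Topics in Matrix Analysis}) --- so any self-contained argument is by definition a different route. Your argument for the semidefinite case, factoring $\mat{A} = \mat{C}\mat{C}^\T$, $\mat{B} = \mat{D}\mat{D}^\T$ and recognizing $[\mat{A}\circ\mat{B}]_{ij} = (\vec{c}_i\otimes\vec{d}_i)^\T(\vec{c}_j\otimes\vec{d}_j)$ as a Gram matrix, is essentially the standard textbook proof and fits naturally with the factorization device the paper uses in its Lemmas 1 and 2. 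Your treatment of the definite case is also sound: the bilinearity of $\circ$ justifies the splitting $\mat{A}\circ\mat{B} = (\mat{A}-\epsilon\mat{I})\circ\mat{B} + \epsilon(\mat{I}\circ\mat{B})$, the diagonal entries of a positive definite $\mat{B}$ are indeed strictly positive, and PSD plus PD is PD. Your caution about the Gram representation not automatically yielding strict definiteness is well placed; for completeness, one \emph{can} push the direct route through, since $\sum_i \alpha_i\,\vec{c}_i\otimes\vec{d}_i = \vecs{0}$ is equivalent to $\mat{C}^\T \opn{diag}(\alpha)\,\mat{D} = \mats{0}$, which forces $\alpha = \vecs{0}$ when $\mat{C}$ and $\mat{D}$ are invertible --- but your perturbation argument avoids this extra step cleanly and is arguably the more robust way to write it.
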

\begin{proof}
See \cite[Th. 7.5.3]{horn:85} or \cite[Th. 5.2.1]{horn:91}.
\end{proof}

\begin{lem}[Schur complement] \label{lem:schur_compl}
Let the matrices $\mat{A}\in \PD{\dim}$ and $\mat{B}\in \PD{\dimp}$
be positive definite, $\mat{A}>\mats{0}$ and $\mat{B}>\mats{0}$, and
not necessarily of the same dimension. Then the following statements
are equivalent
\begin{enumerate}
\item
$\left[
\begin{array}{cc}
\mat{A} & \mat{D} \\
\mat{D}^\H & \mat{B}
\end{array}
\right] \geq 0,$

\item $\mat{B} \geq \mat{D}^\H \mat{A}^{-1} \mat{D}$,

\item $\mat{A} \geq \mat{D} \mat{B}^{-1} \mat{D}^\H$,
\end{enumerate}
where $\mat{D}\in\R^{\dim\times\dimp}$ is any arbitrary matrix.
\end{lem}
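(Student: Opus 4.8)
The plan is to reduce each equivalence to a positivity condition on a block-diagonal matrix by means of a congruence transformation, invoking Sylvester's law of inertia in the same way as in the proof of Lemma~\ref{lem:XIIinvX}. Denote by $\mat{M}$ the $2\times 2$ block matrix appearing in statement~1, with diagonal blocks $\mat{A}$ and $\mat{B}$ and off-diagonal block $\mat{D}$.

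First I would establish $1 \Leftrightarrow 2$ by using the invertibility of $\mat{A}$ (guaranteed by $\mat{A}>\mats{0}$) to eliminate the off-diagonal blocks. Let $\mat{L}$ be the unit lower-triangular block matrix whose only nonzero off-diagonal block is $-\mat{D}^\H\mat{A}^{-1}$ in the lower-left position. A direct block multiplication then shows that $\mat{L}\mat{M}\mat{L}^\H$ is block diagonal, with blocks $\mat{A}$ and the Schur complement $\mat{B} - \mat{D}^\H\mat{A}^{-1}\mat{D}$. Because $\mat{L}$ is invertible (being unit triangular), this is a congruence, so by Sylvester's law of inertia $\mat{M}\geq\mats{0}$ if and only if both diagonal blocks are positive semidefinite. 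Since $\mat{A}>\mats{0}$ by hypothesis, this reduces to $\mat{B} - \mat{D}^\H\mat{A}^{-1}\mat{D}\geq\mats{0}$, which is exactly statement~2.

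The equivalence $1 \Leftrightarrow 3$ follows by the mirror-image argument, now exploiting $\mat{B}>\mats{0}$: conjugating $\mat{M}$ by the unit upper-triangular block matrix carrying $-\mat{D}\mat{B}^{-1}$ in its upper-right position produces a block-diagonal matrix with blocks $\mat{A} - \mat{D}\mat{B}^{-1}\mat{D}^\H$ and $\mat{B}$, and the same inertia argument yields statement~3. Together these give the full three-way equivalence.

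Since everything hinges on two explicit block factorizations plus a single appeal to Sylvester's law, I do not expect a genuine obstacle here; this is a standard result. The only points requiring care are carrying out the block products correctly and observing that the triangular congruence factors are automatically invertible, so that the inertia---and hence the positive semidefiniteness---of $\mat{M}$ is faithfully preserved. It is worth noting that the strength of the congruence approach is that it delivers both directions of each equivalence at once, rather than just the completion-of-squares implication $1 \Rightarrow 2$.
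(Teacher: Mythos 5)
Your argument is correct and complete. Note, however, that the paper does not actually prove this lemma --- it simply defers to the literature (Horn and Johnson, Th.~7.7.6, and Bernstein, Prop.~8.2.3) --- so there is no in-paper proof to compare against; your congruence argument is precisely the standard one found in those references. The two block factorizations check out: with $\mat{L}$ unit lower block-triangular carrying $-\mat{D}^\H\mat{A}^{-1}$, one gets $\mat{L}\mat{M}\mat{L}^\H = \opn{diag}\big(\mat{A},\, \mat{B}-\mat{D}^\H\mat{A}^{-1}\mat{D}\big)$, and since $\mat{L}$ is invertible the congruence preserves (semi)definiteness in both directions, reducing statement~1 to statement~2 because $\mat{A}>\mats{0}$ is assumed; the mirror factorization handles statement~3. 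You are also right to observe that each equivalence uses only the positive definiteness of the corresponding diagonal block, which is slightly sharper than the hypotheses as stated. Stylistically, your proof is consistent with the paper's own treatment of Lemma~\ref{lem:XIIinvX}, where the authors likewise invoke Sylvester's law of inertia for an explicit congruence, so a self-contained proof along your lines would fit naturally in place of the bare citation.
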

\begin{proof}
See \cite[Th.~7.7.6]{horn:85} and the second exercise following it
or \cite[Prop.~8.2.3]{bernstein:05}.
\end{proof}

With the above lemmas at hand, we are now ready to prove the
following proposition:
\begin{prp} \label{prp:XoinvZ}
Consider two positive definite matrices $\mat{A} \in \PD{\dim}$ and
$\mat{B} \in \PD{\dim}$ of the same dimension, and let
$\Diag{\mat{A}}$ be a diagonal matrix containing the diagonal
elements of $\mat{A}$, (\ie, $\Diag{\mat{A}} =
\mat{A}\circ\mat{I}$). Then it follows that
\begin{gather} \label{eq:AoinvB}
\mat{A}\circ\mat{B}^{-1} \geq \Diag{\mat{A}}
\left(\mat{A}\circ\mat{B}\right)^{-1} \Diag{\mat{A}}.
\end{gather}
\end{prp}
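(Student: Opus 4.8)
The plan is to recast the target inequality \req{eq:AoinvB} as a single positive-semidefiniteness statement about a $2\dim\times 2\dim$ block matrix, and then to construct that block matrix as a Hadamard product of two matrices already known to be positive semidefinite. First I would observe that, since $\mat{A}>\mats{0}$ and $\mat{B}>\mats{0}$, both $\mat{A}\circ\mat{B}^{-1}$ and $\mat{A}\circ\mat{B}$ are positive definite by Lemma \ref{lem:schur_th} (the Schur Theorem applied to $\mat{A}$ together with $\mat{B}^{-1}$ and with $\mat{B}$, respectively), so the Schur complement characterization of Lemma \ref{lem:schur_compl} is applicable. By the equivalence between its first and third statements, taking $\mat{D}=\Diag{\mat{A}}$ (which is real diagonal, hence symmetric, so that $\mat{D}^\H=\mat{D}$), establishing \req{eq:AoinvB} is equivalent to showing
\begin{gather} \nonumber
\left[
\begin{array}{cc}
\mat{A}\circ\mat{B}^{-1} & \Diag{\mat{A}} \\
\Diag{\mat{A}} & \mat{A}\circ\mat{B}
\end{array}
\right] \geq \mats{0}.
\end{gather}

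Next I would recognize this block matrix as an entrywise (Hadamard) product. Since the Hadamard product acts blockwise on conformally partitioned matrices, and since $\Diag{\mat{A}}=\mat{A}\circ\mat{I}$ by definition, the matrix above factors as
\begin{gather} \nonumber
\left[
\begin{array}{cc}
\mat{A} & \mat{A} \\
\mat{A} & \mat{A}
\end{array}
\right] \circ
\left[
\begin{array}{cc}
\mat{B}^{-1} & \mat{I} \\
\mat{I} & \mat{B}
\end{array}
\right].
\end{gather}
The first factor is positive semidefinite by Lemma \ref{lem:XXXX}, and the second is positive semidefinite by Lemma \ref{lem:XIIinvX} applied to the positive definite matrix $\mat{B}^{-1}$, for which $(\mat{B}^{-1})^{-1}=\mat{B}$. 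Schur's Theorem (Lemma \ref{lem:schur_th}) then guarantees that their Hadamard product is positive semidefinite, which is precisely the block inequality displayed above. Invoking the Schur complement equivalence of Lemma \ref{lem:schur_compl} once more yields \req{eq:AoinvB}.

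The conceptual crux, and the only genuinely nonroutine step, is spotting the correct factorization of the off-diagonal blocks. The diagonal blocks are forced to be $\mat{A}\circ\mat{B}^{-1}$ and $\mat{A}\circ\mat{B}$, so one naturally seeks a second Hadamard factor whose diagonal blocks are $\mat{B}^{-1}$ and $\mat{B}$; the key realization is that filling its off-diagonal blocks with $\mat{I}$ simultaneously produces the required $\Diag{\mat{A}}=\mat{A}\circ\mat{I}$ terms and, through Lemma \ref{lem:XIIinvX} applied to $\mat{B}^{-1}$, keeps that factor positive semidefinite. Once this alignment is seen, the argument is a direct chaining of the preparatory lemmas and no further computation is required.
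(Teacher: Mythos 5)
Your proof is correct and follows essentially the same route as the paper: the paper likewise forms the Hadamard product of $\left[\begin{smallmatrix}\mat{A} & \mat{A} \\ \mat{A} & \mat{A}\end{smallmatrix}\right]$ with the block matrix from Lemma \ref{lem:XIIinvX} (just with the roles of $\mat{B}$ and $\mat{B}^{-1}$ in the opposite corners, which is the same argument up to a block permutation), invokes Schur's Theorem, and concludes via the Schur complement. No substantive difference.
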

\vspace{.1cm}
\begin{proof}
From Lemmas \ref{lem:XXXX}, \ref{lem:XIIinvX}, and
\ref{lem:schur_th}, it follows that
\begin{gather} \nonumber
\left[
\begin{array}{cc}
\mat{A} & \mat{A} \\
\mat{A} & \mat{A}
\end{array}
\right] \circ \left[
\begin{array}{cc}
\mat{B} & \mat{I} \\
\mat{I} & \mat{B}^{-1}
\end{array}
\right] = \left[
\begin{array}{cc}
\mat{A}\circ\mat{B} & \Diag{\mat{A}} \\
\Diag{\mat{A}} & \mat{A}\circ\mat{B}^{-1}
\end{array}
\right] \geq 0.
\end{gather}
Now, from Lemma \ref{lem:schur_compl}, the result follows directly.
\end{proof}

\begin{cor} \label{cor:dinvXXd}
Let $\mat{A} \in \PD{\dim}$ be a positive definite matrix. Then,
\begin{gather}
\diag{\mat{A}}^\H \left(\mat{A}\circ\mat{A}\right)^{-1}
\diag{\mat{A}} \leq \dim, \label{eq:dinvXXd}
\end{gather}
where we have defined $\diag{\mat{A}} = \Diag{\mat{A}}\mats{1} =
(\mat{A}\circ\mat{I})\mats{1}$ as a column vector with the diagonal
elements of matrix $\mat{A}$.
\end{cor}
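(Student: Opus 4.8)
The plan is to specialize Proposition \ref{prp:XoinvZ} to the choice $\mat{B} = \mat{A}$ and then reduce the resulting matrix inequality to a scalar one by testing it against the all-ones vector $\vecs{1}$. Since $\mat{A} > \mats{0}$ guarantees that both $\mat{A}$ and $\mat{A}^{-1}$ are positive definite, setting $\mat{B} = \mat{A}$ in \req{eq:AoinvB} is legitimate and yields
\begin{gather} \nonumber
\mat{A}\circ\mat{A}^{-1} \geq \Diag{\mat{A}} \left(\mat{A}\circ\mat{A}\right)^{-1} \Diag{\mat{A}}.
\end{gather}

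Next I would pre- and post-multiply this positive-semidefinite ordering by $\vecs{1}^\H$ and $\vecs{1}$, which preserves the inequality. Recalling the definition $\diag{\mat{A}} = \Diag{\mat{A}}\vecs{1}$, the right-hand side collapses precisely to the quadratic form appearing in \req{eq:dinvXXd}, namely $\vecs{1}^\H \Diag{\mat{A}} \left(\mat{A}\circ\mat{A}\right)^{-1} \Diag{\mat{A}} \vecs{1} = \diag{\mat{A}}^\H \left(\mat{A}\circ\mat{A}\right)^{-1} \diag{\mat{A}}$, so that the claimed bound \req{eq:dinvXXd} would follow as soon as the left-hand scalar is shown to equal $\dim$.

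It then remains only to evaluate $\vecs{1}^\H \left(\mat{A}\circ\mat{A}^{-1}\right) \vecs{1}$. The key observation is that sandwiching a Hadamard product between $\vecs{1}^\H$ and $\vecs{1}$ returns a trace: for any conformable $\mat{P}$ and $\mat{Q}$ one has $\vecs{1}^\H (\mat{P}\circ\mat{Q}) \vecs{1} = \sum_{i,j} [\mat{P}]_{ij} [\mat{Q}]_{ij} = \Tr(\mat{P}\mat{Q}^\H)$. Taking $\mat{P} = \mat{A}$ and $\mat{Q} = \mat{A}^{-1}$ and exploiting the symmetry of $\mat{A}^{-1}$ gives
\begin{gather} \nonumber
\vecs{1}^\H \left(\mat{A}\circ\mat{A}^{-1}\right) \vecs{1} = \Tr\left(\mat{A}\mat{A}^{-1}\right) = \Tr(\mat{I}) = \dim,
\end{gather}
which completes the argument. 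The only step requiring any care is this last trace identity, since it is the symmetry of $\mat{A}^{-1}$ that allows the Hadamard-to-trace conversion to close cleanly; everything else is an immediate consequence of the proposition just established.
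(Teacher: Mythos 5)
Your proof is correct and follows essentially the same route as the paper: specialize Proposition \ref{prp:XoinvZ} to $\mat{B}=\mat{A}$, sandwich the resulting matrix inequality between $\vecs{1}^\H$ and $\vecs{1}$, and use $\vecs{1}^\H(\mat{A}\circ\mat{A}^{-1})\vecs{1}=\dim$. The only difference is cosmetic: the paper cites this last identity from the literature, whereas you derive it via the Hadamard-to-trace conversion, which is a nice self-contained touch.
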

\begin{proof}
Particularizing the result in Proposition \ref{prp:XoinvZ} with
$\mat{B} = \mat{A}$ and pre- and post-multiplying it by
$\mats{1}^\T$ and $\mats{1}$ we obtain
\begin{gather} \nonumber
\mats{1}^\T\left(\mat{A}\circ\mat{A}^{-1}\right)\mats{1} \geq
\mats{1}^\T \Diag{\mat{A}} \left(\mat{A}\circ\mat{A}\right)^{-1}
\Diag{\mat{A}} \mats{1}.
\end{gather}
The result in \req{eq:dinvXXd} now follows straightforwardly from
the fact $\vecs{1}^\H \left( \mat{A}\circ\mat{A}^{-\T}
\right)\vecs{1} = \dim$, \cite{johnson:86} (see also \cite[Fact
7.6.10]{bernstein:05}, \cite[Lemma 5.4.2(a)]{horn:91}). Note that
$\mat{A}$ is symmetric and thus $\mat{A}^{\T} = \mat{A}$ and
$\mat{A}^{-\T} = \mat{A}^{-1}$.
\end{proof}
\begin{rem}
Note that the proof of Corollary \ref{cor:dinvXXd} is based on the
result of Proposition \ref{prp:XoinvZ} in \req{eq:AoinvB}. An
alternative proof could follow similarly from a different inequality
by Styan in \cite{styan:73}
\begin{gather} \nonumber 
\mat{R}\circ\mat{R}^{-1} + \mat{I} \geq
2\left(\mat{R}\circ\mat{R}\right)^{-1},
\end{gather}
where $\mat{R}$ is constrained to be a correlation matrix
$\mat{R}\circ\mat{I} = \mat{I}$.
\end{rem}

\begin{prp} \label{prp:XXones}
Consider now the positive semidefinite matrix $\mat{A} \in
\PSD{\dim}$. Then,
\begin{gather} \nonumber
\mat{A}\circ\mat{A} \geq
\frac{\diag{\mat{A}}\diag{\mat{A}}^\H}{\dim}.
\end{gather}
\end{prp}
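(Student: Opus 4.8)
The plan is to recognize that the claimed matrix inequality is nothing but the Schur-complement dual of Corollary \ref{cor:dinvXXd}, which has just been established. Concretely, I would first assume $\mat{A} > \mats{0}$ (the strictly positive definite case), so that by the Schur product theorem (Lemma \ref{lem:schur_th}) the matrix $\mat{A}\circ\mat{A} > \mats{0}$ is invertible. I would then form the $(\dim+1)\times(\dim+1)$ block matrix
\begin{gather} \nonumber
\left[
\begin{array}{cc}
\mat{A}\circ\mat{A} & \diag{\mat{A}} \\
\diag{\mat{A}}^\H & \dim
\end{array}
\right],
\end{gather}
whose trailing $1\times 1$ block is the positive scalar $\dim$, and apply Lemma \ref{lem:schur_compl} with $\mat{D} = \diag{\mat{A}}$. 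Statement~2 of that lemma reads $\dim \geq \diag{\mat{A}}^\H(\mat{A}\circ\mat{A})^{-1}\diag{\mat{A}}$, which is exactly Corollary \ref{cor:dinvXXd}; hence the block matrix is positive semidefinite, and the equivalent Statement~3 is precisely the desired inequality $\mat{A}\circ\mat{A} \geq \dim^{-1}\diag{\mat{A}}\diag{\mat{A}}^\H$.

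The one gap to close is that the proposition allows $\mat{A}$ to be merely positive semidefinite, whereas both Lemma \ref{lem:schur_compl} and Corollary \ref{cor:dinvXXd} require strict positivity so that $\mat{A}\circ\mat{A}$ may be inverted. I expect this to be the only genuine obstacle. I would remove it by a continuity argument: for $\epsilon > 0$ set $\mat{A}_\epsilon = \mat{A} + \epsilon\mat{I} > \mats{0}$, apply the positive definite case to get $\mat{A}_\epsilon\circ\mat{A}_\epsilon \geq \dim^{-1}\diag{\mat{A}_\epsilon}\diag{\mat{A}_\epsilon}^\H$, and let $\epsilon\to 0$. Since $\diag{\mat{A}_\epsilon} = \diag{\mat{A}} + \epsilon\mats{1}$, both sides converge entrywise to the corresponding expressions for $\mat{A}$, and because the cone of positive semidefinite matrices is closed, the inequality survives in the limit.

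Alternatively, a fully self-contained argument that covers the semidefinite case at once proceeds by testing against an arbitrary vector $\vec{z}\in\R^{\dim}$. Writing $\opn{diag}(\vec{z})$ for the diagonal matrix built from $\vec{z}$ and using $\mat{A}^{1/2}$, one has the identity $\vec{z}^\H(\mat{A}\circ\mat{A})\vec{z} = \Tr\big(\mat{N}^2\big)$ with $\mat{N} = \mat{A}^{1/2}\opn{diag}(\vec{z})\mat{A}^{1/2}$ symmetric, while $\diag{\mat{A}}^\H\vec{z} = \Tr(\mat{N})$. The Cauchy--Schwarz inequality in the Frobenius inner product (with constant $\dim = \Tr(\mat{I})$) then gives $(\Tr\mat{N})^2 \leq \dim\,\Tr(\mat{N}^2)$, \ie, $\vec{z}^\H(\mat{A}\circ\mat{A})\vec{z} \geq \dim^{-1}(\diag{\mat{A}}^\H\vec{z})^2 = \dim^{-1}\vec{z}^\H\diag{\mat{A}}\diag{\mat{A}}^\H\vec{z}$, which is the claim since $\vec{z}$ is arbitrary. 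Here the only points needing a line of justification are the trace identity for the Hadamard quadratic form and the cyclic manipulation $\Tr(\opn{diag}(\vec z)\mat{A}\,\opn{diag}(\vec z)\mat{A}) = \Tr(\mat{N}^2)$.
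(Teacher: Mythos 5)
Your primary argument is precisely the paper's own proof: invoke Corollary \ref{cor:dinvXXd} together with Lemma \ref{lem:schur_compl} to certify that the $(\dim+1)\times(\dim+1)$ block matrix $\left[\begin{smallmatrix}\mat{A}\circ\mat{A} & \diag{\mat{A}} \\ \diag{\mat{A}}^\H & \dim\end{smallmatrix}\right]$ is positive semidefinite, apply the lemma again in the other direction to extract $\mat{A}\circ\mat{A} \geq \diag{\mat{A}}\diag{\mat{A}}^\H/\dim$, and handle the merely semidefinite case by the identical $\mat{A}+\epsilon\mat{I}$ continuity argument. Your alternative Cauchy--Schwarz argument via $\Tr(\mat{N})^2 \leq \dim\,\Tr(\mat{N}^2)$ with $\mat{N}=\mat{A}^{1/2}\opn{diag}(\vec{z})\mat{A}^{1/2}$ is also correct and has the merit of covering the semidefinite case without regularization, but it is not the route the paper takes.
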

\vspace{.1cm}
\begin{proof}
For the case where $\mat{A} \in \PD{\dim}$ is positive definite,
from \req{eq:dinvXXd} in Corollary \ref{cor:dinvXXd} and Lemma
\ref{lem:schur_compl}, it follows that
\begin{gather} \nonumber
\left[
\begin{array}{cc}
\mat{A}\circ\mat{A} & \diag{\mat{A}} \\
\diag{\mat{A}}^\H & \dim
\end{array}
\right] \geq 0.
\end{gather}
Applying again Lemma \ref{lem:schur_compl}, we get
\begin{gather} \label{eq:XXones_proof}
\mat{A}\circ\mat{A} \geq
\frac{\diag{\mat{A}}\diag{\mat{A}}^\H}{\dim}.
\end{gather}
Now, assume that $\mat{A} \in \PSD{\dim}$ is positive semidefinite.
We thus define $\epsilon > 0$ and consider the positive definite
matrix $\mat{A} + \epsilon\mat{I}$. From \req{eq:XXones_proof}, we
know that
\begin{gather} \nonumber
\left(\mat{A}+ \epsilon\mat{I}\right) \circ \left( \mat{A} +
\epsilon\mat{I} \right) \geq \frac{\diag{\mat{A}+
\epsilon\mat{I}}\diag{\mat{A}+ \epsilon\mat{I}}^\H}{\dim}.
\end{gather}
Taking the limit as $\epsilon$ tends to 0, from continuity, the validity of \req{eq:XXones_proof} for positive
semidefinite matrices follows.
\end{proof}

Finally, to end this section about mathematical prelimi\-na\-ries,
we give a very brief overview on some basic definitions related to
minimum mean-square error (MMSE) estimation. These definitions are
useful in our further derivations due to the relation between the
entropy and the MMSE unveiled in \cite{guo:05}.\footnote{Strictly
speaking the relation found in \cite{guo:05} concerns the quantities
of mutual information and MMSE, but it is still useful for our
problem because the entropy $\Ent(\rvec{y})$ and the mutual
information $\I(\rvec{x}; \rvec{y})$ have the same dependence on
$\preEig$ up to a constant additive term.} Next, we give a lemma
concerning the positive semidefiniteness of a certain class of
matrices closely related with MMSE estimation.

Consider the setting described in the introduction, $\rvec{Y} =
\preEig^{1/2}\rvec{X} + \rvec{W}$. For a given realization of the
observations vector $\rvec{Y} = \rvecr{y}$, the MMSE estimator,
$\widehat{\rvec{X}}(\rvecr{y})$, is given by the conditional mean 
\begin{gather} \nonumber
\widehat{\rvec{X}}(\rvecr{y}) = \Esp{\rvec{x}|\rvec{y} = \rvecr{y}}.
\end{gather}
We now define the conditional MMSE matrix, $\mats{\Phi}_{\rvec{X}}(\rvecr{y})$, as the mean-square error matrix conditioned on the fact that the received vector is equal to $\rvec{Y} = \rvecr{y}$. Formally
\begin{align}
\mats{\Phi}_{\rvec{X}}(\rvecr{y}) &\triangleq \Esp{(\rvec{X} -
\widehat{\rvec{X}}(\rvecr{y})) (\rvec{X} -
\widehat{\rvec{X}}(\rvecr{y}))^\T\big|\rvec{Y}= \rvecr{y}} \nonumber
\\ & = \Esp{\rvec{X}\rvec{X}\phantom{}^\T \big| \rvec{Y}= \rvecr{y}}
\label{eq:mmsecy} \\ & \quad \quad \quad - \Esp{\rvec{X}|\rvec{Y}=
\rvecr{y}} \Esp{\rvec{X}\phantom{}^\T\big|\rvec{Y}= \rvecr{y}}.
\nonumber
\end{align}
From this definition, it is clear that
$\mats{\Phi}_{\rvec{X}}(\rvecr{y})$ is a positive semi-definite
matrix.

Now, the MMSE matrix $\mat{E}_{\rvec{X}}$ can be calculated by
averaging $\mats{\Phi}_{\rvec{X}}(\rvecr{y})$ in \req{eq:mmsecy}
with respect to the distribution of vector $\rvec{y}$ as
\begin{gather} \label{eq:mmse}
\mat{E}_{\rvec{X}} = \Esp{\mats{\Phi}_{\rvec{X}}(\rvec{y})}.
\end{gather}
See below the last lemma in this section.
\begin{lem} \label{lem:psd}
For a given random vector $\rvec{x} \in \RS^{\dim}$, it follows that
$\Esp{\rvec{X}\rvec{X}\phantom{}^\T} \geq
\Esp{\rvec{X}}\Esp{\rvec{X}\phantom{}^\T}$.
\end{lem}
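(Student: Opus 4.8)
The plan is to recognize the matrix $\Esp{\rvec{X}\rvec{X}\phantom{}^\T} - \Esp{\rvec{X}}\Esp{\rvec{X}\phantom{}^\T}$ as the covariance matrix of $\rvec{X}$ and to establish its positive semidefiniteness by reducing the matrix inequality to the scalar case. First I would introduce the mean vector $\vecs{\mu} = \Esp{\rvec{X}}$ and verify, by expanding the centered outer product and invoking linearity of expectation, the algebraic identity
\begin{gather} \nonumber
\Esp{(\rvec{X} - \vecs{\mu})(\rvec{X} - \vecs{\mu})^\T} = \Esp{\rvec{X}\rvec{X}\phantom{}^\T} - \vecs{\mu}\vecs{\mu}^\T,
\end{gather}
so that the difference of interest is exactly the covariance matrix of $\rvec{X}$.

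Next, for an arbitrary fixed vector $\vec{a} \in \R^\dim$, I would form the associated quadratic form and recognize it as the variance of the scalar random variable $\vec{a}^\T \rvec{X}$,
\begin{gather} \nonumber
\vec{a}^\T \big( \Esp{\rvec{X}\rvec{X}\phantom{}^\T} - \vecs{\mu}\vecs{\mu}^\T \big) \vec{a} = \Esp{(\vec{a}^\T \rvec{X})^2} - \big(\Esp{\vec{a}^\T \rvec{X}}\big)^2 \geq 0,
\end{gather}
where the inequality holds because a variance is never negative. Since $\vec{a}$ is arbitrary, this yields the Loewner inequality $\Esp{\rvec{X}\rvec{X}\phantom{}^\T} \geq \Esp{\rvec{X}}\Esp{\rvec{X}\phantom{}^\T}$, which is the claim.

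An equivalent and even more direct route, in the same spirit as the factorization used in Lemma \ref{lem:XXXX}, would be to note that for every realization the integrand $(\rvec{X} - \vecs{\mu})(\rvec{X} - \vecs{\mu})^\T$ is a rank-one positive semidefinite matrix, and that taking expectations preserves the positive semidefinite cone, so the average is again positive semidefinite. There is no genuine obstacle in the argument; the only point that deserves minor care is the implicit assumption that the second moments of $\rvec{X}$ are finite, which is what makes the expectations above well defined. In our setting this is guaranteed by the structure of the problem (the MMSE interpretation and the added Gaussian noise), so the reduction to the scalar variance is fully justified.
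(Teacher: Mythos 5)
Your proof is correct and follows essentially the same route as the paper: both identify the difference as the covariance matrix $\Esp{(\rvec{X} - \Esp{\rvec{X}})(\rvec{X} - \Esp{\rvec{X}})^\T}$ and conclude positive semidefiniteness because expectation preserves the positive semidefinite cone. Your additional reduction to the nonnegativity of the variance of $\vec{a}^\T\rvec{X}$ is just an unwinding of that same fact, so there is no substantive difference.
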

\begin{proof}
Simply note that
\begin{multline} \nonumber
\Esp{\rvec{X}\rvec{X}\phantom{}^\T} - \Esp{\rvec{X}}\Esp{\rvec{X}\phantom{}^\T} \\
= \Esp{(\rvec{X} - \Esp{\rvec{X}})(\rvec{X} - \Esp{\rvec{X}})^\T}
\geq \mats{0}, \nonumber
\end{multline}
where last inequality follows from the fact that the expectation
operator preserves positive semidefiniteness.
\end{proof}

\section{Main result of the paper}
\label{sec:results}

Once all the mathematical preliminaries have been presented, in this
section we give the main result of the paper, namely, the concavity
of the entropy power function $\EP(\rvec{y})$ in \req{eq:EPX}, with
respect to the diagonal elements of $\preEig$. Prior to proving this
result, we present a weaker result concerning the concavity of the
entropy function $\Ent(\rvec{Y})$, which is key in proving the
concavity of the entropy power.

\subsection{Warm up: An entropy inequality}

\begin{thm} \label{thm:MV_EI}
Assume $\rvec{Y} = \preEig^{1/2}\rvec{X} + \rvec{W}$, where
$\rvec{X}$ is arbitrarily distributed and $\rvec{W}$ follows a
zero-mean white Gaussian distribution. Then the entropy
$\Ent(\rvec{Y})$ is a concave function of the diagonal elements of
$\preEig$, \ie,
\begin{gather} \nonumber
\nabla_{\preEvec}^2 \Ent(\rvec{Y}) \leq \mats{0}.
\end{gather}
Furthermore, the entries of the Hessian matrix of the entropy
function $\Ent(\rvec{y})$ with res\-pect to $\preEvec$ are given by
\begin{multline} \label{eq:HessEnt_ij}
\left[\nabla_{\preEvec}^2 \Ent(\rvec{Y})\right]_{ij} \\ =
-\frac{1}{2} \Esp{ \big(\Esp{\rveci{x}{i}\rveci{x}{j} | \rvec{Y}} -
\Esp{\rveci{x}{i}|\rvec{Y}} \Esp{\rveci{x}{j}|\rvec{Y}}\big)^2},
\end{multline}
which can be written more compactly as
\begin{gather} \label{eq:HessEnt}
\nabla_{\preEvec}^2 \Ent(\rvec{Y}) =
-\frac{1}{2} \Esp{\mats{\Phi}_{\rvec{X}}(\rvec{y}) \circ \mats{\Phi}_{\rvec{X}}(\rvec{y})}.
\end{gather}
\end{thm}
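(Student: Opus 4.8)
The plan is to reduce the statement to a computation of the Hessian of the mutual information and then read off its entries through the conditional MMSE matrix $\mats{\Phi}_{\rvec{X}}(\rvec{y})$. Since $\rvec{W}$ is independent of $\rvec{X}$, the conditional entropy satisfies $\Ent(\rvec{Y}\mid\rvec{X})=\Ent(\rvec{W})$, which does not depend on $\preEig$; hence $\Ent(\rvec{Y})=\I(\rvec{X};\rvec{Y})+\text{const}$ in $\preEvec$ and $\nabla_{\preEvec}^2\Ent(\rvec{Y})=\nabla_{\preEvec}^2\I(\rvec{X};\rvec{Y})$. Because $\rvec{W}$ is Gaussian, the density $P_{\rvec{Y}}(\rvec{y})$ is a smooth function of both $\rvec{y}$ and $\preEvec$ even when $\rvec{X}$ is discrete, which legitimizes differentiating under the expectation sign throughout.

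For the first derivative I would invoke the I-MMSE relationship of \cite{guo:05}, adapted to the vector channel $\rvec{Y}=\preEig^{1/2}\rvec{X}+\rvec{W}$. Differentiating with respect to the individual gain $\preeig{i}$ isolates the $i$-th component and gives $\frac{\partial}{\partial\preeig{i}}\Ent(\rvec{Y})=\tfrac12\,[\mat{E}_{\rvec{X}}]_{ii}=\tfrac12\,\Esp{[\mats{\Phi}_{\rvec{X}}(\rvec{y})]_{ii}}$, i.e. one half of the MMSE in estimating $\rveci{x}{i}$ from $\rvec{Y}$ (equivalently, via the gradient identity $\nabla_{\mat{H}}\I=\mat{H}\mat{E}_{\rvec{X}}$ with $\mat{H}=\preEig^{1/2}$ and the chain rule, the factor $\preeig{i}^{1/2}$ from $\mat{H}$ cancels the $\tfrac12\preeig{i}^{-1/2}$ from the square root).

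The heart of the argument is the second differentiation, namely computing $\frac{\partial}{\partial\preeig{j}}\,\tfrac12\Esp{[\mats{\Phi}_{\rvec{X}}(\rvec{y})]_{ii}}$. Here the derivative acts both on the conditional moments $\Esp{\,\cdot\,\mid\rvec{Y}=\rvec{y}}$ that build $\mats{\Phi}_{\rvec{X}}$ and on the averaging density of $\rvec{Y}$, both depending on $\preeig{j}$ through the Gaussian kernel $P_{\rvec{Y}\mid\rvec{X}}(\rvec{y}\mid\rvec{x})$. I would write every conditional expectation as a ratio of integrals against this kernel, differentiate its exponent $-\tfrac{1}{2}\|\rvec{y}-\preEig^{1/2}\rvec{x}\|^2$ in $\preeig{j}$, and then use Gaussian integration by parts (Stein-type identities) to trade the resulting $\rveci{x}{j}$ factors for $\rvec{y}$-derivatives of the conditional mean. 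After the score/smoothing terms cancel, the surviving contribution collapses to the squared conditional covariance, yielding $[\nabla_{\preEvec}^2\Ent(\rvec{Y})]_{ij}=-\tfrac12\Esp{\big(\Esp{\rveci{x}{i}\rveci{x}{j}\mid\rvec{Y}}-\Esp{\rveci{x}{i}\mid\rvec{Y}}\Esp{\rveci{x}{j}\mid\rvec{Y}}\big)^2}$, which is \req{eq:HessEnt_ij}. This bookkeeping of the cross-terms, keeping careful track of which derivative produces which conditional moment and how the factor $-\tfrac12$ emerges, is exactly where I expect the main difficulty to lie.

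It then remains to package the result and deduce concavity. Since $[\mats{\Phi}_{\rvec{X}}(\rvec{y})]_{ij}^2=[\mats{\Phi}_{\rvec{X}}(\rvec{y})\circ\mats{\Phi}_{\rvec{X}}(\rvec{y})]_{ij}$, the entrywise expression \req{eq:HessEnt_ij} is precisely the compact form \req{eq:HessEnt}, $\nabla_{\preEvec}^2\Ent(\rvec{Y})=-\tfrac12\Esp{\mats{\Phi}_{\rvec{X}}(\rvec{y})\circ\mats{\Phi}_{\rvec{X}}(\rvec{y})}$. Finally, for each $\rvec{y}$ the matrix $\mats{\Phi}_{\rvec{X}}(\rvec{y})$ is positive semidefinite, so by Schur's theorem (Lemma \ref{lem:schur_th}) its Hadamard square $\mats{\Phi}_{\rvec{X}}(\rvec{y})\circ\mats{\Phi}_{\rvec{X}}(\rvec{y})$ is positive semidefinite as well; since the expectation preserves positive semidefiniteness (as in Lemma \ref{lem:psd}), $\Esp{\mats{\Phi}_{\rvec{X}}(\rvec{y})\circ\mats{\Phi}_{\rvec{X}}(\rvec{y})}\geq\mats{0}$, whence $\nabla_{\preEvec}^2\Ent(\rvec{Y})\leq\mats{0}$ and the claimed concavity follows.
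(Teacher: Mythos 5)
Your proposal lands on the correct formulas, and your closing concavity argument (positive semidefiniteness of $\mats{\Phi}_{\rvec{X}}(\rvecr{y})$ for every $\rvecr{y}$, Schur's theorem, and preservation of semidefiniteness under expectation) is exactly the paper's. The route to the Hessian entries, however, is genuinely different. The paper does not differentiate the I-MMSE gradient in the gain parametrization; it first writes $\Ent(\preEig^{1/2}\rvec{X}+\rvec{W})=\Ent(\rvec{X}+\preEig^{-1/2}\rvec{W})+\tfrac12\log|\preEig|$, changes variables to $\noiseR=\preEig^{-1}$ and $\rvec{Z}=\rvec{X}+\noiseR^{1/2}\rvec{W}$, and then invokes two identities in the \emph{noise} parametrization: de Bruijn's identity for $\partial\Ent/\partial\noiser{i}$ and the multivariate Costa--Villani identity $\partial^2\Ent/\partial\noiser{i}\partial\noiser{j}=-\tfrac12\Esp{(\partial^2\log P_{\rvec{Z}}/\partial z_i\partial z_j)^2}$, before converting back with the chain rule; the $\delta_{ij}$ terms generated by the chain rule and by $\tfrac12\log|\preEig|$ then cancel exactly. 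Your plan buys a shorter path that avoids the change of variables and the log-determinant bookkeeping, at the price of having to establish the multivariate ``derivative of the MMSE'' identity from scratch rather than importing the known second-derivative-of-entropy formula.

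That said, the decisive step of your argument is asserted rather than carried out. The sentence claiming that ``after the score/smoothing terms cancel, the surviving contribution collapses to the squared conditional covariance'' \emph{is} the content of \req{eq:HessEnt_ij}: everything before it is standard and everything after it is immediate. To complete the proof along your route you must actually differentiate $\tfrac12\Esp{[\mats{\Phi}_{\rvec{X}}(\rvec{y})]_{ii}}$ with respect to $\preeig{j}$ --- write the conditional moments as ratios of integrals against the Gaussian kernel, use the Stein-type relation that $(\rvecri{y}{j}-\preeig{j}^{1/2}\rveci{x}{j})$ times the kernel is proportional to its $\partial/\partial y_j$ derivative to integrate by parts, track the derivative of the conditional mean $\partial\Esp{\rveci{x}{i}|\rvec{Y}=\rvecr{y}}/\partial y_j$ that this produces, and verify that the extra $i=j$ terms cancel --- before the theorem can be considered proved. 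The mechanism you describe is the right one and the target identity is correct (it is the multivariate analogue of the known scalar fact that the derivative of the MMSE with respect to the SNR equals minus the expected squared conditional variance), but as written the proposal is a plan for the computation, not the computation; the cancellations you are counting on are precisely what the paper's two appendices are devoted to verifying.
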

\vspace{.1cm}
\begin{proof}
For the computations leading to \req{eq:HessEnt_ij} and
\req{eq:HessEnt} see Appendix \ref{sec:ddent}. Once the expression
in \req{eq:HessEnt} is obtained, concavity (or negative
semidefiniteness of the Hessian matrix) follows straightforwardly
taking into account that the matrix $\mats{\Phi}_{\rvec{X}}(\rvecr{y})$ defined in \req{eq:mmsecy} is positive semidefinite
$\forall \rvecr{y}$, Lemma \ref{lem:schur_th}, and from the fact
that the expectation operator preserves the semidefiniteness.
\end{proof}

\subsection{Multivariate extension of Costa's EPI}
\begin{thm} \label{thm:MV_EPI}
Assume $\rvec{Y} = \preEig^{1/2}\rvec{X} + \rvec{W}$, where
$\rvec{X}$ is arbitrarily distributed and $\rvec{W}$ follows a
zero-mean white Gaussian distribution. Then the entropy power
$\EP(\rvec{Y})$ is a concave function of the diagonal elements of
$\preEig$, \ie,
\begin{gather} \nonumber
\nabla_{\preEvec}^2 \EP(\rvec{Y}) \leq \mats{0}.
\end{gather}
Moreover, the Hessian matrix of the entropy power function
$\EP(\rvec{y})$ with respect to $\preEvec$ is given by
\begin{multline} \label{eq:N}
\nabla_{\preEvec}^2 \EP(\rvec{Y}) \\ = \frac{\EP(\rvec{Y})}{\dim}
\left( \frac{\diag{\mat{E}_{\rvec{X}}}
\diag{\mat{E}_{\rvec{X}}}^\T}{\dim} - \Esp{\mats{\Phi}_{\rvec{X}}(\rvec{y})
\circ \mats{\Phi}_{\rvec{X}}(\rvec{y})}\right),
\end{multline}
where we recall that $\diag{\mat{E}_{\rvec{X}}}$ is a column vector
with the diagonal entries of the matrix $\mat{E}_{\rvec{X}}$ defined in \req{eq:mmse}.
\end{thm}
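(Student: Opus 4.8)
The plan is to obtain \req{eq:N} by differentiating the definition \req{eq:EP} of the entropy power twice and then to read off the sign from the matrix inequalities established in the preliminaries. Writing $\EP(\rvec{Y}) = \frac{1}{2\pi\nume}\exp\big(\frac{2}{\dim}\Ent(\rvec{Y})\big)$ and applying the chain rule componentwise in $\preEvec$, the Hessian of $\EP$ decomposes into a rank-one gradient-outer-product term and a term carrying the Hessian of the entropy,
\begin{gather} \nonumber
\nabla_{\preEvec}^2 \EP(\rvec{Y}) = \EP(\rvec{Y})\,\tfrac{4}{\dim^2}\, \big(\nabla_{\preEvec}\Ent(\rvec{Y})\big)\big(\nabla_{\preEvec}\Ent(\rvec{Y})\big)^\T + \EP(\rvec{Y})\,\tfrac{2}{\dim}\,\nabla_{\preEvec}^2\Ent(\rvec{Y}).
\end{gather}
Thus the whole computation reduces to supplying the gradient and the Hessian of $\Ent(\rvec{Y})$ with respect to $\preEvec$.

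The Hessian $\nabla_{\preEvec}^2 \Ent(\rvec{Y})$ is already furnished by \req{eq:HessEnt} in Theorem \ref{thm:MV_EI}. For the gradient I would invoke the entropy--MMSE relation of \cite{guo:05}: since each $\preeig{i}$ plays the role of the signal-to-noise ratio of the $i$-th component of the channel $\rvec{Y} = \preEig^{1/2}\rvec{X} + \rvec{W}$, differentiating the entropy with respect to $\preeig{i}$ yields one half of the MMSE of $\rveci{x}{i}$, that is $[\nabla_{\preEvec}\Ent(\rvec{Y})]_i = \frac{1}{2}[\mat{E}_{\rvec{X}}]_{ii}$, so that $\nabla_{\preEvec}\Ent(\rvec{Y}) = \frac{1}{2}\diag{\mat{E}_{\rvec{X}}}$. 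Substituting this gradient and the Hessian \req{eq:HessEnt} into the displayed expansion, the constants $\frac{4}{\dim^2}$ and $(\frac12)^2$ collapse to $\frac{1}{\dim^2}$, producing the rank-one term $\frac{\EP(\rvec{Y})}{\dim^2}\diag{\mat{E}_{\rvec{X}}}\diag{\mat{E}_{\rvec{X}}}^\T$, while $\frac{2}{\dim}$ and $-\frac12$ give $-\frac{\EP(\rvec{Y})}{\dim}\Esp{\mats{\Phi}_{\rvec{X}}(\rvec{y})\circ\mats{\Phi}_{\rvec{X}}(\rvec{y})}$; factoring out $\EP(\rvec{Y})/\dim$ reproduces \req{eq:N} verbatim.

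It then remains to show that the bracket in \req{eq:N} is negative semidefinite, that is $\Esp{\mats{\Phi}_{\rvec{X}}(\rvec{y})\circ\mats{\Phi}_{\rvec{X}}(\rvec{y})} \geq \frac{1}{\dim}\diag{\mat{E}_{\rvec{X}}}\diag{\mat{E}_{\rvec{X}}}^\T$, which I would do by chaining two inequalities. First, because $\mats{\Phi}_{\rvec{X}}(\rvecr{y})$ is positive semidefinite for every $\rvecr{y}$, Proposition \ref{prp:XXones} applied pointwise gives $\mats{\Phi}_{\rvec{X}}(\rvecr{y})\circ\mats{\Phi}_{\rvec{X}}(\rvecr{y}) \geq \frac{1}{\dim}\diag{\mats{\Phi}_{\rvec{X}}(\rvecr{y})}\diag{\mats{\Phi}_{\rvec{X}}(\rvecr{y})}^\T$, and taking the expectation over $\rvec{y}$ preserves this ordering. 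Second, regarding $\diag{\mats{\Phi}_{\rvec{X}}(\rvec{y})}$ as a random vector through $\rvec{y}$ and noting from \req{eq:mmse} that its mean is exactly $\diag{\mat{E}_{\rvec{X}}}$, Lemma \ref{lem:psd} yields $\Esp{\diag{\mats{\Phi}_{\rvec{X}}(\rvec{y})}\diag{\mats{\Phi}_{\rvec{X}}(\rvec{y})}^\T} \geq \diag{\mat{E}_{\rvec{X}}}\diag{\mat{E}_{\rvec{X}}}^\T$. Combining the two and multiplying by the nonnegative scalar $\EP(\rvec{Y})/\dim$ gives $\nabla_{\preEvec}^2\EP(\rvec{Y}) \leq \mats{0}$.

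I expect the main obstacle to be the first-order step, namely establishing cleanly that $\nabla_{\preEvec}\Ent(\rvec{Y}) = \frac{1}{2}\diag{\mat{E}_{\rvec{X}}}$, since this is where the entropy--MMSE correspondence and the square-root parametrization $\preEig^{1/2}$ must be handled with care; once the gradient is in hand, the sign of the Hessian follows mechanically from Theorem \ref{thm:MV_EI}, Proposition \ref{prp:XXones}, and Lemma \ref{lem:psd}. A secondary technical point is the legitimacy of the pointwise-then-expectation application of Proposition \ref{prp:XXones}, which is justified because the expectation operator preserves positive semidefiniteness.
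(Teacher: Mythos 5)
Your proposal is correct and follows essentially the same route as the paper: the same chain-rule decomposition of $\nabla_{\preEvec}^2 \EP(\rvec{Y})$ into a rank-one gradient term plus the entropy Hessian, the same use of the entropy--MMSE gradient $[\nabla_{\preEvec}\Ent(\rvec{Y})]_i = \tfrac{1}{2}[\mat{E}_{\rvec{X}}]_{ii}$ together with \req{eq:HessEnt}, and the same two-step bound via Proposition \ref{prp:XXones} applied pointwise followed by Lemma \ref{lem:psd} and the commutation of $\diag{\cdot}$ with expectation. The only cosmetic difference is that the paper phrases the final inequality for an abstract family $\mat{A}(\vec{v})$ before identifying it with $\mats{\Phi}_{\rvec{X}}(\rvec{y})$, whereas you work with $\mats{\Phi}_{\rvec{X}}(\rvec{y})$ directly.
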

\begin{proof}
First, let us prove \req{eq:N}. From the definition of the entropy
power in \req{eq:EP} and applying the chain rule we obtain
\begin{gather} \nonumber
\nabla_{\preEvec}^2 \EP(\rvec{Y}) = \frac{2\EP(\rvec{Y})}{\dim}
\left( \frac{2\nabla_{\preEvec}\Ent(\rvec{Y}) \nabla_{\preEvec}^\T
\Ent(\rvec{Y})}{\dim} + \nabla_{\preEvec}^2 \Ent(\rvec{Y}) \right).
\end{gather}
Now, replacing $\nabla_{\preEvec}\Ent(\rvec{Y})$ by its expression
from \cite[Eq. (61)]{guo:05}
\begin{align} \nonumber
[\nabla_{\preEvec}\Ent(\rvec{y})]_i &= \frac{1}{2}
[\mat{E}_{\rvec{x}}]_{ii} = \frac{1}{2}
\Esp{[\mats{\Phi}_{\rvec{X}}(\rvec{y})]_{ii}}, 
\end{align}
and incorporating the expression for
$\nabla_{\preEvec}^2\Ent(\rvec{Y})$ calculated in \req{eq:HessEnt},
the result in \req{eq:N} follows.

Now that a explicit expression for the Hessian matrix has been
obtained, we wish to prove that it is negative semidefinite. Note
from \req{eq:N} that, except for a positive factor, the Hessian
matrix $\nabla_{\preEvec}^2 \EP(\rvec{Y})$ is the sum of a rank one
positive semidefinite matrix and the Hessian matrix of the entropy,
which is negative semidefinite according to Theorem \ref{thm:MV_EI}.
Consequently, the definiteness of $\nabla_{\preEvec}^2
\EP(\rvec{Y})$ is unknown a priori, and some further developments
are needed to determine it, which is what we do next.

Consider a family of positive semidefinite
matrices $\mat{A} \in \PSD{\dim}$, characterized by a
certain vector parameter $\vec{v}$, $\mat{A} = \mat{A}(\vec{v})$.
Applying Pro\-po\-si\-tion \ref{prp:XXones} to each matrix in this family,
we obtain
\begin{gather} \label{eq:Av}
\mat{A}(\vec{v}) \circ \mat{A}(\vec{v}) \geq
\frac{\diag{\mat{A}(\vec{v})}\diag{\mat{A}(\vec{v})}^\T}{\dim}.
\end{gather}
Since \req{eq:Av} is true for all possible values of $\vec{v}$, we
have
\begin{gather} \label{eq:EspAv}
\Esp{\mat{A}(\rvec{v}) \circ \mat{A}(\rvec{v})} \geq
\frac{\Esp{\diag{\mat{A}(\rvec{v})}\diag{\mat{A}(\rvec{v})}^\T}}{\dim},
\end{gather}
where now the parameter $\vec{v}$ has been considered to be a random
variable, $\rvec{v}$. Note that the distribution of $\rvec{v}$ is
arbitrary and does not affect the validity of \req{eq:EspAv}. From
Lemma \ref{lem:psd} we know that
\begin{gather} \nonumber
\Esp{\diag{\mat{A}(\rvec{v})}\diag{\mat{A}(\rvec{v})}^\T} \geq
\Esp{\diag{\mat{A}(\rvec{v})}}\Esp{\diag{\mat{A}(\rvec{v})}^\T},
\end{gather}
from which it follows that
\begin{gather} \nonumber
\Esp{\mat{A}(\rvec{v}) \circ \mat{A}(\rvec{v})} \geq
\frac{\Esp{\diag{\mat{A}(\rvec{v})}}\Esp{\diag{\mat{A}(\rvec{v})}^\T}}{\dim}.
\end{gather}
Since the operators $\diag{\mat{A}}$ and expectation commute we
finally obtain
\begin{gather} \nonumber
\Esp{\mat{A}(\rvec{v}) \circ \mat{A}(\rvec{v})} \geq
\frac{\diag{\Esp{\mat{A}(\rvec{v})}}\diag{\Esp{\mat{A}(\rvec{v})}}^\T}{\dim}.
\end{gather}

Identifying $\mat{A}(\rvec{v})$ with the random covariance error
matrix $\mats{\Phi}_{\rvec{X}}(\rvec{Y})$ and using \req{eq:mmse} the
result in the theorem follows as
\begin{gather} \nonumber
\frac{\diag{\mat{E}_{\rvec{X}}} \diag{\mat{E}_{\rvec{X}}}^\T}{\dim} -
\Esp{\mats{\Phi}_{\rvec{X}}(\rvec{Y}) \circ \mats{\Phi}_{\rvec{X}}(\rvec{Y})}
\leq \mats{0},
\end{gather}
and $\EP(\rvec{Y}) \geq 0$.
\end{proof}

\section{Conclusion}
In this paper we have proved that, for $\rvec{Y} =
\preEig^{1/2}\rvec{X} + \rvec{W}$  the functions $\EP(\rvec{y})$ and
$\Ent(\rvec{y})$ are concave with respect to the diagonal entries of
$\preEig$ and have also given explicit expressions for the elements
of the Hessian matrices $\nabla_{\preEvec}^2 \EP(\rvec{Y})$ and
$\nabla_{\preEvec}^2 \Ent(\rvec{Y})$.

Besides its theoretical interest and inherent beauty, the importance
of the results presented in this work lie mainly in their potential
applications, such as, the calculation of the optimal power
allocation to maximize the mutual information for a given
non-Gaussian constellation as described in \cite{payaro:08}.


\appendices

\section{Calculation of $\nabla_{\preEvec}^2 \Ent(\rvec{Y})$} \label{sec:ddent}

In this section we are interested in the calculation of the elements
of the Hessian matrix $\left[\nabla_{\preEvec}^2
\Ent(\rvec{Y})\right]_{ij}$, which are defined by
\begin{gather} \nonumber
\left[\nabla_{\preEvec}^2 \Ent(\rvec{Y})\right]_{ij} =
\frac{\partial^2 \Ent(\preEig^{1/2}\rvec{X} + \rvec{W})}{\partial
\preeig{i} \partial \preeig{j}}.
\end{gather}

First of all, using the properties of differential entropy we write
\begin{gather} \nonumber
\Ent(\preEig^{1/2}\rvec{X} + \rvec{W}) = \Ent(\rvec{X} +
\preEig^{-1/2} \rvec{W}) + \frac{1}{2}\log|\preEig|,
\end{gather}
and recalling that we work with natural logarithms we have
\begin{gather} \label{eq:ddhP}
\frac{\partial^2 \Ent(\preEig^{1/2}\rvec{X} + \rvec{W})}{\partial
\preeig{i} \partial \preeig{j}} = \frac{\partial^2 \Ent(\rvec{X} +
\preEig^{-1/2}\rvec{W})}{\partial \preeig{i}
\partial \preeig{j}} - \frac{\delta_{ij}}{2\preeig{i}^2}.
\end{gather}
We are now interested in expanding the first term in the right hand
side of last equation, so we define the diagonal matrix $\noiseR =
\preEig^{-1}$ and the random vector $\rvec{z} =
\preEig^{-1/2}\rvec{y}$. Thus $[\noiseR]_{ii} = \noiser{i} =
1/\preeig{i}$ and $\rvec{Z} = \rvec{X} + \preEig^{-1/2}\rvec{W} =
\rvec{X} + \noiseR^{1/2}\rvec{W}$.

Applying the chain rule we obtain
\begin{multline} \label{eq:chain_rule}
\frac{\partial^2 \Ent(\rvec{X} + \preEig^{-1/2}\rvec{W})}{\partial
\preeig{i} \partial \preeig{j}} =  \frac{1}{\preeig{i}^2
\preeig{j}^2} \frac{\partial^2 \Ent(\rvec{X} +
\noiseR^{1/2}\rvec{W})}{\partial \noiser{i} \partial
\noiser{j}}\Big|_{\noiseR = \preEig^{-1}} \\ +
\frac{2\delta_{ij}}{\preeig{i}^3}\frac{\partial \Ent(\rvec{X} +
\noiseR^{1/2}\rvec{W})}{\partial \noiser{i}}\Big|_{\noiseR =
\preEig^{-1}}.
\end{multline}
The expressions for the two terms
\begin{gather} \nonumber
\frac{\partial^2 \Ent(\rvec{X} + \noiseR^{1/2}\rvec{W})}{\partial
\noiser{i} \partial \noiser{j}} \quad \textrm{and} \quad
\frac{\partial \Ent(\rvec{X} + \noiseR^{1/2}\rvec{W})}{\partial
\noiser{i}}
\end{gather}
are given in Appendix \ref{sec:Bakry}, where we also sketch how they
can be computed, for further details see \cite{payaro:08}. Using
these results, the right hand side of the expression in
\req{eq:chain_rule} can be rewritten as
\begin{multline} \nonumber
\frac{1}{\preeig{i}^2 \preeig{j}^2} \left( -\frac{1}{2} \Esp{ \frac{
\left(\Esp{\rveci{x}{i}\rveci{x}{j} | \rvec{Z}} -
\Esp{\rveci{x}{i}|\rvec{Z}}
\Esp{\rveci{x}{j}|\rvec{Z}}\right)^2}{\noiser{i}^2\noiser{j}^2}}
\right.
\\ \left. - \frac{\delta_{ij}}{2\noiser{i}^2} + \Esp{\frac{ \Esp{\rveci{x}{i}^2
|\rvec{Z}} - (\Esp{\rveci{x}{i}|\rvec{Z}})^2
}{\noiser{i}^3}}\delta_{ij} \right)\Bigg|_{\noiseR = \preEig^{-1}}
\\ + \frac{2\delta_{ij}}{\preeig{i}^3}\left(
\frac{1}{2\noiser{i}^2} \left(\noiser{i} - \Esp{(\rveci{x}{i} -
\Esp{\rveci{x}{i}|\rvec{Z}})^2} \right) \right) \Bigg|_{\noiseR =
\preEig^{-1}}.
\end{multline}
Simplifying terms we obtain
\begin{multline} \label{eq:ddhN}
-\frac{1}{2} \Esp{ \left(\Esp{\rveci{x}{i}\rveci{x}{j} | \rvec{Z}} -
\Esp{\rveci{x}{i}|\rvec{Z}} \Esp{\rveci{x}{j}|\rvec{Z}}\right)^2}
\\ - \frac{\delta_{ij}}{2\preeig{i}^2} + \frac{\delta_{ij}}{\preeig{i}} \Esp{\Esp{\rveci{x}{i}^2
|\rvec{Z}} - (\Esp{\rveci{x}{i}|\rvec{Z}})^2}
\\ + \frac{\delta_{ij}}{\preeig{i}^2} - \frac{\delta_{ij}}{\preeig{i}}\Esp{(\rveci{x}{i} -
\Esp{\rveci{x}{i}|\rvec{Z}})^2}.
\end{multline}

Finally, noting that
\begin{gather} \nonumber
\Esp{(\rveci{x}{i} - \Esp{\rveci{x}{i}|\rvec{Z}})^2} =
\Esp{\Esp{\rveci{x}{i}^2 |\rvec{Z}} - (\Esp{\rveci{x}{i}|\rvec{Z}})^2} \\
\Esp{f(\rvec{X})|\rvec{Z}} =
\bb{E}\big\{f(\rvec{X})|\preEig^{1/2}\rvec{Z}\big\} =
\Esp{f(\rvec{X})|\rvec{Y}}, \nonumber
\end{gather}
and plugging \req{eq:ddhN} in \req{eq:ddhP} we obtain the desired
result in \req{eq:HessEnt_ij}:
\begin{multline} \nonumber
\frac{\partial^2 \Ent(\preEig^{1/2}\rvec{X} + \rvec{W})}{\partial
\preeig{i} \partial \preeig{j}} \\ = -\frac{1}{2} \Esp{
\left(\Esp{\rveci{x}{i}\rveci{x}{j} | \rvec{Y}} -
\Esp{\rveci{x}{i}|\rvec{Y}} \Esp{\rveci{x}{j}|\rvec{Y}}\right)^2}.
\end{multline}
By simple inspection of the entries of the Hessian matrix above, the
result in \req{eq:HessEnt} can be found.

\section{Gradient and Hessian of $\Ent(\rvec{Z} = \rvec{X} +
\noiseR^{1/2}\rvec{W})$} \label{sec:Bakry}

The elements of the gradient of $\Ent(\rvec{Z}= \rvec{X} +
\noiseR^{1/2}\rvec{W})$ with respect to the diagonal elements of
$\noiseR$ can be found thanks to the complex multivariate de
Bruijn's identity found in \cite[Th. 4]{palomar:06} adapted to the
real case
\begin{gather} \label{eq:gradNw} 
\frac{\partial \Ent(\rvec{X} + \noiseR^{1/2}\rvec{W})}{\partial
\noiser{i}} = \frac{1}{2} \Esp{\left(\frac{\partial \log
\pdf{\rvec{z}}{\rvec{z}}}{\partial z_i} \right)^2}.
\end{gather}

The elements of the Hessian matrix can be found quite directly from
the expressions found in \cite[Eq. (50)]{costa:85} and in Villani's
Lemma in \cite{villani:00} for the single dimensional second
derivative $\d^2 \Ent(\rvec{X} + \sqrt{t}\rvec{W})/\d t^2$ (see
\cite{payaro:08} for further details on the specific generalization
to the multidimensional case):
\begin{gather} \label{eq:HessNw} 
\frac{\partial^2 \Ent(\rvec{X} + \noiseR^{1/2}\rvec{W})}{\partial
\noiser{i} \partial \noiser{j}} = -\frac{1}{2}
\Esp{\left(\frac{\partial^2 \log
\pdf{\rvec{z}}{\rvec{z}}}{\partial z_i \partial
z_j}\right)^2}.
\end{gather}

To further elaborate the expressions in \req{eq:gradNw} and
\req{eq:HessNw} we see that we need to compute the gradient and
Hessian of the function $\log \pdf{\rvec{z}}{\vec{z}}$. The
expression for the gradient has already been given in \cite[Eq.
(56)]{guo:05}, \cite[Eq. (105)]{palomar:06}
\begin{gather} \label{eq:dilogP}
\frac{\partial \log \pdf{\rvec{z}}{\rvecr{z}}}{\partial z_i} =
\frac{\Esp{\rveci{x}{i}|\rvec{z} = \rvecr{z}} -
\rvecri{z}{i}}{\noiser{i}}.
\end{gather}

The expression for the Hessian of $\log \pdf{\rvec{z}}{\vec{z}}$
requires slightly more elaboration and here we only give a sketch,
more details can be found in \cite{payaro:08}.

Differentiating \req{eq:dilogP} with respect to $z_j$ we obtain
\begin{multline} \label{eq:dijlogP}
\frac{\partial^2\log\pdf{\rvec{z}}{\rvecr{z}}}{\partial z_i \partial
z_j} = \frac{1}{\noiser{i} \noiser{j}} \left( \Esp{\rveci{x}{i}
\rveci{x}{j} | \rvec{z} = \rvecr{z}} \right.
\\ - \left. \Esp{\rveci{x}{i}|\rvec{z} = \rvecr{z}} \Esp{\rveci{x}{j}|\rvec{z} =
\rvecr{z}} \right) - \frac{\delta_{ij}}{\noiser{i}},
\end{multline}
where we have used that \cite{payaro:08}
\begin{multline}
\frac{\partial \Esp{\rveci{x}{i}|\rvec{z} = \rvecr{z}}}{\partial
z_j} = \frac{1}{\noiser{j}}\left(\Esp{\rveci{x}{i} \rveci{x}{j} |
\rvec{z} = \rvecr{z}} \right. \\ \left. - \Esp{\rveci{x}{i}|\rvec{z}
= \rvecr{z}} \Esp{\rveci{x}{j}|\rvec{z} = \rvecr{z}} \right).
\nonumber
\end{multline}

Plugging \req{eq:dilogP} into \req{eq:gradNw} and operating
according to the derivation in \cite[Eq. (106)]{palomar:06} we obtain
\begin{gather} \nonumber
\frac{\partial \Ent(\rvec{X} + \noiseR^{1/2}\rvec{W})}{\partial
\noiser{i}} = \frac{1}{2\noiser{i}^2}\left(\noiser{i} -
\Esp{(\rveci{x}{i} - \Esp{\rveci{x}{i}|\rvec{Z}})^2}\right).
\end{gather}

Similarly, plugging \req{eq:dijlogP} into \req{eq:HessNw} we obtain
the desired expression for the Hessian as
\begin{multline} \nonumber
\frac{\partial^2 \Ent(\rvec{X} + \noiseR^{1/2}\rvec{W})}{\partial
\noiser{i} \partial \noiser{j}} \\ = \! -\frac{1}{2} \EspOp \! \left\{\!\!\left(
\frac{\Esp{\rveci{x}{i}\rveci{x}{j} | \rvec{Z}} -
\Esp{\rveci{x}{i}|\rvec{Z}}
\Esp{\rveci{x}{j}|\rvec{Z}}}{\noiser{i}\noiser{j}} -
\frac{\delta_{ij}}{\noiser{i}} \right)^{\!\!2}\!\right\}, \nonumber
\end{multline}
which can be expanded as \vspace{-.12cm}
\begin{multline} \nonumber
\frac{\partial^2 \Ent(\rvec{X} + \noiseR^{1/2}\rvec{W})}{\partial
\noiser{i} \partial \noiser{j}} \\ =  -\frac{1}{2} \Esp{
\frac{\left(\Esp{\rveci{x}{i}\rveci{x}{j} | \rvec{Z}} -
\Esp{\rveci{x}{i}|\rvec{Z}}
\Esp{\rveci{x}{j}|\rvec{Z}}\right)^2}{\noiser{i}^2\noiser{j}^2}}
\\ - \frac{\delta_{ij}}{2\noiser{i}^2} +
\Esp{\frac{\Esp{\rveci{x}{i}^2 | \rvec{Z}} -
(\Esp{\rveci{x}{i}|\rvec{Z}})^2 }{\noiser{i}^3}}\delta_{ij}.
\nonumber
\end{multline}

\bibliographystyle{IEEEtran}
\bibliography{references_precoderdesign_arxiv}

\end{document}